\def\ps@pprintTitle{%
 \let\@oddhead\@empty
 \let\@evenhead\@empty
 \def\@oddfoot{}%
 \let\@evenfoot\@oddfoot}
\newcommand{\beq}{\begin{equation}}
\newcommand{\eeq}{\end{equation}}
\newcommand{\bea}{\begin{eqnarray}}
\newcommand{\eea}{\end{eqnarray}}
\newcommand{\bc}{\begin{cases}}
\newcommand{\ec}{\end{cases}}
\newcommand{\nn}{\nonumber}
\newcommand\noi{\noindent}
\newtheorem{definition}{Definition}
\newtheorem{proposition}{Proposition}
\newtheorem{theorem}{Theorem}
\theoremstyle{definition}
\newtheorem{remark}{\textbf{Remark}}
\begin{document}

\author{Piergiulio Tempesta}
\address{Departamento de F\'{\i}sica Te\'{o}rica II (M\'{e}todos Matem\'{a}ticos de la F\'isica), Facultad de F\'{\i}sicas, Universidad
Complutense de Madrid, 28040 -- Madrid, Spain \\ and
Instituto de Ciencias Matem\'aticas, C/ Nicol\'as Cabrera, No. 13--15, 28049 Madrid, Spain}
\email{p.tempesta@fis.ucm.es, piergiulio.tempesta@icmat.es}
\title[Groups and entropies]{Formal Groups and Z-Entropies}
\date{November 15, 2016}
\maketitle


\begin{abstract}
We shall prove that the celebrated R\'enyi entropy is the first example of a new family of infinitely many multi-parametric entropies. We shall call them the \textit{Z-entropies}. Each of them, under suitable hypotheses,  generalizes the celebrated entropies of Boltzmann and R\'enyi.

A crucial aspect is that every $Z$-entropy is \textit{composable} \cite{Tempesta6}. This  property means that the entropy of a system which is composed of two or more independent systems depends, in all the associated probability space, on the choice of the two systems only. Further properties are also required, to describe the composition process in terms of a group law.

The composability axiom, introduced as a generalization of the fourth Shannon-Khinchin axiom (postulating additivity),  is a highly non-trivial requirement. Indeed, in the trace-form class, the Boltzmann entropy and Tsallis entropy are the only known composable cases. However, in the non-trace form class, the Z-entropies arise as new entropic functions possessing the mathematical properties necessary for information-theoretical applications, in both classical and quantum contexts.

From a mathematical point of view, composability is intimately related to formal group theory of algebraic topology. The underlying group-theoretical structure determines crucially the statistical properties of the corresponding entropies.
\end{abstract}

\textit{Keywords: Generalized Entropies, Group Theory, Information Theory}

\tableofcontents

\section{Introduction}

Since the pioneering work by Boltzmann, Clausius and Gibbs, the notion of entropy has been widely investigated for its prominence in thermodynamics and statistical mechanics, in both classical and quantum contexts \cite{BS}--\cite{Landsberg}. The research activity inaugurated by Shannon, Khinchin and R\'enyi recognized the foundational role of entropy in modern information theory \cite{Renyi}--\cite{Khinchin}. In particular, new entropic functions, designed for different purposes, have been introduced (see e.g. \cite{Tbook}).

The purpose of this paper is to show that the theory of generalized entropies can be mathematically interpreted, and widely extended, by means of an approach based on \textit{formal group theory} \cite{Haze}. Since the seminal paper by Bochner \cite{Boch}, formal groups have been intensively investigated in the last decades, because they play a prominent role in several branches of mathematics, especially algebraic topology \cite{BMN}--\cite{Faltings}, combinatorics \cite{AB}, the theory of elliptic curves \cite{Serre2}, arithmetic and analytic number theory \cite{MT}--\cite{Tempesta5}.

In order to relate the notion of entropy to group theory, we shall discuss the mathematical requirements that an entropic function has to satisfy. A natural set of conditions is represented by the first three Shannon-Khinchin (SK) axioms (see appendix A for their formulation). These axioms correspond to the requirements of continuity, expansibility (adding an event of zero probability does not affect an entropy) and of the maximization of entropy on the uniform distribution.

At the same time, another crucial property that, in our opinion, any entropy must satisfy is that of \textit{composability} \cite{Tempesta6}. Essentially, it requires that the entropy of a system obtained by composing two independent systems $A$ and $B$ can  always be expressed in terms of the entropies of $A$ and of $B$ only, \textit{for any possible choice of the probability distributions of $A$ and $B$}.

This property is at the heart of the notion of entropy. Indeed, it is related  to the requirement, pointed out in \cite{Gell-Mann}, that entropy is usefully defined on macroscopic states of a given system, without the need for any knowledge of the underlying microscopical dynamics. 

As clarified in \cite{Tempesta6}, composability is even more demanding: we should be able to compose two independent systems in a commutative way, and \textit{three} independent systems in an associative way. Also, if we compose a system with another one in a zero-entropy configuration, the entropy of the compound system should be equal to the entropy of the first system. In other words, a group-theoretical structure is needed.
All these requirements were encoded in \cite{Tempesta6} in the \textit{composability axiom}. It generalizes the fourth Shannon-Khinchin axiom (additivity of an entropy) and allows a large family of entropic functions to be constructed.

Concerning the mathematical form that generalized entropies can assume, we distinguish two different classes.

The first one is the \textit{trace-form class}. It contains entropies of the form $\sum_{i=1}^{W} F(p_{i})$, where $\{p_i\}_{i\in\mathbb{N}}$ is a probability distribution and $F(p_i)$ is a suitable function. The standard case of the Boltzmann-Gibbs entropy is recovered when $F(p_i)=p_i\ln\left(\frac{1}{p_i}\right)$. In the last three decades, this family of entropic functions has been widely investigated, due to its relevance in the theory of complex systems.

The second one is the class of \textit{non-trace-form entropies}. The most well-known representative is the R\'enyi entropy
$
S_{\alpha}=\frac{\ln \sum_{i=1}^{W} p_{i}^{\alpha}}{1-\alpha}$,  $\alpha>0 .
$
It was introduced in \cite{Renyi1960}  and has inspired much work in information theory. Its quantum version is crucial in the study of entanglement of quantum complex systems.

A huge part of the set of entropies introduced in the last three decades, perhaps because of the influence of the classical Boltzmann entropy, belongs to the trace-form class. Also, these new entropies are supposed to recover Boltzmann's entropy in some appropriate limit. Nevertheless, the trace-form class has a serious drawback: the generic lack of composability, with the two remarkable exceptions of the Boltzmann entropy and the Tsallis entropy (which in this paper we present in a two-parametric form).


However, many generalized trace-form entropies are composable in a weak sense only, i.e. over the \textit{uniform probability distribution}. This feature of \textit{weak composability} is certainly necessary for thermodynamic purposes,  but its lack of generality makes it unsatisfactory.
However, one of the outcomes of our analysis is that by renouncing the trace-form structure  we can define new composable entropies. To summarize, the most relevant result of this work is the following.

\textbf{Main result}.
\textit{There exists a new family of non-trace-form strictly composable entropies}.


Precisely, we shall prove that the celebrated R\'enyi entropy is the simplest representative of a huge family of entropies that we shall call the \textbf{Z-entropies}. Each of them depends on a real parameter $\alpha$ and possesses a group structure.  Indeed, to each entropy there is associated a suitable invertible function $G$, allowing under certain hypotheses a \textit{formal group law} to be defined \cite{Haze}, i.e. a two-variable series determined by the relation $\Psi(x,y)=G(G^{-1}(x)+G^{-1}(y))$  which satisfies the axioms of an Abelian group. The group law is nothing but the functional equation obeyed by the considered $Z$-entropy under the composition of two statistically independent systems.
In this framework, the R\'enyi entropy has associated the function $\Psi(x,y)=x+y$, obtained when $G=Id$, i.e. the additive group law. The corresponding functional equation is $S_{\alpha}(A\cup B)= S_{\alpha}(A)+ S_{\alpha}(B)$, for any couple of independent subsystems $A$, $B$. This construction enables us to formulate the whole theory of generalized entropies into a firm foundational group-theoretical framework.

\noi A very general expression of the non-trace-form family of group entropies is (see Definition \ref{ZG} below)
\beq
Z_{G,\alpha}(p_1,\ldots,p_W):=\frac{\ln_{G} \left(\sum_{i=1}^{W} p_{i}^{\alpha}\right)}{1-\alpha}, \label{Zent}
\eeq
where $\ln_{G}(x)$ is a generalized logarithm and $\alpha>0$ is a real parameter. When $0<\alpha<1$, the $Z_{G,\alpha}$ entropy is concave; when $\alpha>1$, under mild hypotheses it is Schur concave. Throughout this paper, we shall assume that the Boltzmann constant $k_{\mathcal{B}}=1$.

A fundamental property of the $Z$-entropies \eqref{Zent} is that, under suitable hypotheses, they \textit{generalize, at the same time, the entropies of Boltzmann and R\'enyi} in a unified expression. 
In a specific example, the Tsallis entropy (jointly with those of Boltzmann and R\'enyi) is also recovered.

By choosing adequately the generalized group logarithm in eq. \eqref{Zent}, we can also define \textit{strictly composable analogues} of well known entropies, including those of Kaniadakis, Borges-Roditi, etc. None of them is indeed composable.


In the trace-form case, the group-theoretical approach to the concept of entropy  has been already formulated in \cite{Tempesta6}. There, the notion of \textit{universal-group entropy}, as the entropy related to the Lazard universal formal group, has been introduced.
%

From the previous considerations, it emerges that there exists a \textit{dual construction} among the two classes of entropies. Precisely, each group law admits (at least) two realizations in terms of two different entropic forms, belonging to the two classes $Z_{G, \alpha}[p]$ and $S_{U}[p]$, respectively.

The additive group has two representatives: the Boltzmann entropy and the R\'enyi entropy. The multiplicative group $\Psi(x,y)= x+y+ \gamma xy$, where $\gamma\in\mathbb{R}$, has interesting realizations, as the Tsallis entropy in the trace-form case, and the entropy \eqref{AT}
in the non-trace-form family. 

There are several reasons to introduce the class of $Z$-entropies. A first motivation comes from the need to establish a mathematically unifying approach to the theory of generalized entropies. From this point of view, the \textit{group entropies}, namely entropies satisfying the first three SK axioms and possessing a group-theoretical structure ensuring the validity of the composition process over all possible states, seem to possess a special role.
From a mathematical point of view, each group entropy of the $Z$-class satisfies two crucial properties: strict composability and \textit{extensivity}. Indeed, one can show that under mild conditions, for a given phase space growth rate $W=W(N)$, there exists a specific entropy of the $Z$-class such that, over the uniform distribution, $\lim_{N\to\infty} S(N)/N= \lambda>0$. A priori this allows, at least formally, one to develop (in a micro-canonical picture) a formalism similar to the classical one for ergodic systems. However, the purpose of this work is not to develop thermodynamics. The full clarification of the possible thermodynamic meaning of the class of group entropies (as already pointed out in \cite{Tempesta6}, \cite{Tempesta4}, \cite{Tempestaprepr}) is an open problem.

Another reason comes from the study of quantum entanglement. Indeed, quantum versions of $Z$-entropies (briefly discussed in this work) could be very useful in the detection of entanglement for quantum complex systems, as spin chains. An example is proposed  where a $Z$-entropy can be used, instead of the quantum R\'enyi entropy, as a measure of the ground state entanglement for a generalized Lipkin-Meshkov-Glick model that was recently introduced \cite{CFGRT}. In particular, this new quantum entropy is extensive (i.e. proportional to $L$) in a context when the entropy of R\'enyi is not.

At the same time, the $Z_{G, \alpha}$ entropies are possibly relevant in information theory. Indeed, they widely generalize both the classical $\alpha$-divergences and the R\'enyi divergences. One can also expect that $Z$-entropies could play an important role in information geometry.

To conclude, a few words concerning the organization of the article. In section 2, the group-theoretical apparatus necessary for the subsequent discussion is sketched. In section 3, we formulate the composability axiom.  In section 4, generalized logarithms are derived from group laws. In section 5, we discuss properties of the trace-form class and more specifically of Tsallis entropy, which is the most general composable trace-form entropy. The new family of $Z$-entropies is defined  in section 6 and some of its properties are proved in section 7. In section 8, some representatives of the $Z$-family are discussed in detail. In particular, the relevant example of the $Z_{a,b}$-entropy is presented in section 9. A quantum version of the $Z$-class is proposed in section 10. Some open problems are proposed in the final section 11.

\section{Groups and entropies: a general approach}
In the subsequent Sections, we shall present a comprehensive theory of generalized entropies based on formal group laws. We will start by recalling some basic facts and definitions of the theory of formal groups (see \cite{Haze} for a thorough exposition, and \cite{Serre} for a shorter introduction).

\subsection{The Lazard  formal group}

Let $R$ be a commutative associative ring  with identity, and $R\left\{ x_{1},\text{ }%
x_{2},..\right\} $ be the ring of formal power series in the variables $x_{1}$, $x_{2}$,
... with coefficients in $R$.

\begin{definition} \label{formalgrouplaw}
A commutative one--dimensional formal group law
over $R$ is a formal power series $\Psi \left( x,y\right) \in R\left\{
x,y\right\} $ such that \cite{Boch}
\begin{equation*}
1)\qquad \Psi \left( x,0\right) =\Psi \left( 0,x\right) =x,
\end{equation*}%
\begin{equation*}
2)\qquad \Psi \left( \Psi \left( x,y\right) ,z\right) =\Psi \left( x,\Psi
\left( y,z\right) \right) \text{.}
\end{equation*}
When $\Psi \left( x,y\right) =\Psi \left( y,x\right) $, the formal group law is
said to be commutative.
\end{definition}
The existence of an inverse formal series $\omega
\left( x\right) $ $\in R\left\{ x\right\} $ such that $\Psi \left( x,\omega
\left( x\right) \right) =0$ is a consequence of Definition \ref{formalgrouplaw}. Let  $B=\mathbb{Z}[b_{1},b_{2},...]$ be the  ring of integral polynomials in infinitely many variables. We shall consider the series
\beq
F\left( s\right) = \sum_{i=0}^{\infty} b_i \frac{s^{i+1}}{i+1}
\label{I.1},
\eeq
with $b_0=1$. Let $G\left( t\right)$ be its compositional inverse:
\beq
G\left( t\right) =\sum_{k=0}^{\infty} a_k \frac{t^{k+1}}{k+1} \label{I.2},
\eeq
i.e. $F\left( G\left( t\right) \right) =t$. From this property, we deduce $a_{0}=1, a_{1}=-b_1, a_2= \frac{3}{2} b_1^2 -b_2,\ldots$.
The Lazard formal group law \cite{Haze} is defined by the formal power series
\[
\Psi_{\mathcal{L}} \left( s_{1},s_{2}\right) =G\left( G^{-1}\left(
s_{1}\right) +G^{-1}\left( s_{2}\right) \right).
\]
The coefficients of the power series $G\left( G^{-1}\left( s_{1}\right) +G^{-1}\left(
s_{2}\right) \right)$ lie in the ring $B \otimes \mathbb{Q}$ and generate over $\mathbb{Z}$ a subring $L \subset B \otimes \mathbb{Q}$, called the Lazard ring.

For any commutative one-dimensional formal group law over any ring $R$, there exists a unique homomorphism $L\to R$ under which the Lazard group law is mapped into the given group law (the \textit{universal property} of the Lazard group).

Let $R$ be a ring with no torsion \footnote{The torsion-free hypothesis was implicit also in \cite{Tempesta6} and \cite{Tempestaprepr}.}. Then, for any commutative one-dimensional formal group law $\Psi(x,y)$ over $R$, there exists a series $\psi(x)\in R[[x]] \otimes \mathbb{Q}$ such that
\[
\psi(x)= x+ O(x^2), \quad \text{and} \quad \Psi(x,y)= \psi^{-1}\left(\psi(x)+\psi(y)\right)\in R[[x,y]]\otimes \mathbb{Q}.
\]

In the subsequent considerations, the universal formal group will play the role of a very general composition law admissible for the construction of the entropies of the $Z$-family.

\section{The Composability Axiom}
In this section we shall first define the notion of composability, in the strict and weak sense, by following (and slightly generalizing) \cite{Tempesta6}.
\begin{definition} \label{composab}
An entropy $S$ is strictly  (or strongly) \textit{composable} if there exists a continuous function of two real variables $\Phi(x,y)$ such that

\text{(C1)}
 \beq
S(A \cup B)=\Phi(S(A),S(B);\{\eta\}) \label{C1}
\eeq
where $A$ and $B$ are two statistically independent systems, defined for any probability distribution $\{p_{i}\}_{i=1}^{W}$, $\{\eta\}$ is a possible set of real continuous parameters,  with the further properties

(C2) Symmetry:
\beq
\Phi(x,y)=\Phi(y,x). \label{C2}
\eeq

(C3) Associativity:
\beq
\Phi(x,\Phi(y,z))=\Phi(\Phi(x,y),z). \label{C4}
\eeq

(C4) Null-composability:
\beq
\Phi(x,0)=x. \label{C3}
\eeq
\end{definition}
This axiom is necessary to ensure that a given entropy may be suitable for thermodynamic purposes. Indeed, it should be obviously symmetric with respect to the exchange of the labels $A$ and $B$. At the same time, if we compose a given system with another system in a state with zero entropy, the total entropy should coincide with that of the given system. Finally, the composability of more than two independent systems in an \textit{associative} way is crucial to define a zeroth law.

\begin{definition} A \textbf{\textit{group entropy}} is a function $S(p_1,\ldots,p_W)\rightarrow R^{+}\cup\{0\}$ that satisfies the first three SK axioms and is strictly composable.
\end{definition}

\begin{remark} \label{remark1}
Observe that in the class of formal power series, the function $\Phi(x,y)$,  satisfying the conditions \eqref{C1}-\eqref{C4}, according to Definition \ref{formalgrouplaw} is a formal \textit{group law} over the reals, with a suitable formal series inverse $\varphi(x)$ such that $\Phi(x,\varphi(x))=0$ \cite{Haze}. The general form of this group law is
$
\Phi(x,y)=x+y+\sum_{kl} c_{kl} x^k y^l.
$
\end{remark}

\noi
Other properties of $\Phi(x,y)$ can be found in \cite{Tempesta6}, where the notion of \textit{weak composability} was also proposed.
The weak formulation essentially requires the composability of the generalized logarithm associated with a given entropy.  The weak property is satisfied by almost all of the generalized entropies proposed in the literature. However, it is not obvious: Indeed, there are entropic forms that satisfy the first three SK axioms, but are not weakly composable.

In \cite{Tempesta4}, this notion (called there ``composability" \textit{tout court}) has been formulated for a class of entropic functions coming from difference operators.


\section{Generalized logarithms and exponentials from group laws}
There is a simple construction allowing  a generalized logarithm from a given group law to be defined. Needless to say, many other definitions of generalized logarithm can be proposed. The following one is of a group-theoretical nature.
\begin{definition} \label{defglog}
A  generalized group logarithm is a continuous,  strictly increasing and strictly concave function $\ln_{G}: (0,\infty)\to \mathbb{R}$, with $\ln_{G}(1)=0$ (possibly depending on a set of real parameters); the functional equation
\beq
\ln_{G}(xy)= \chi(\ln_{G}(x),\ln_{G}(y)) \label{glog}
\eeq
will be called the group law associated with $\ln_{G}(\cdot)$.
\end{definition}
\par
Observe that the eq. \eqref{glog} has a simple realization
\beq
\chi(x,y):=G(G^{-1}(x)+G^{-1}(y)), \label{GPsi}
\eeq
being $G(x):=\ln_{G}(exp(x))$ a strictly increasing continuous function, vanishing at zero.

One of our key results is the following simple proposition, which is a restatement of the previous observation.
%

\begin{proposition}
Let $G$ be a continuous strictly increasing function vanishing at zero. The function $F_{G}(x)$ defined by
\beq
F_{G}(x):= G\left(\ln x^{\gamma}\right), \qquad \gamma\neq 0 \label{Glog}
\eeq
satisfies  eq. \eqref{glog}, where $\chi(x,y)$ is the group law \eqref{GPsi}.
\end{proposition}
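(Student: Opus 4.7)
The plan is to verify equation \eqref{glog} for $F_G$ by a direct substitution that reduces the identity to the elementary functional equation $\ln(xy)=\ln x + \ln y$. The group-theoretic structure is already encoded in the form \eqref{GPsi} of $\chi$, so the task is simply to show that conjugating addition by $G$ after applying the rescaled logarithm $\gamma\ln(\cdot)$ produces the right multiplicative rule.

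First I would note that, since $G$ is continuous, strictly increasing and vanishes at $0$, it admits a continuous strictly increasing inverse $G^{-1}$ defined on $G(\mathbb{R})$, and $F_G(1)=G(0)=0$. Moreover, $F_G$ is strictly monotone (increasing if $\gamma>0$, decreasing if $\gamma<0$) as a composition of strictly monotone maps, so it is meaningful to speak of $G^{-1}\circ F_G$.

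Next I would compute
\[
G^{-1}\bigl(F_G(x)\bigr)=G^{-1}\bigl(G(\gamma\ln x)\bigr)=\gamma\ln x,
\]
and identically $G^{-1}(F_G(y))=\gamma\ln y$. Substituting into \eqref{GPsi} yields
\[
\chi\bigl(F_G(x),F_G(y)\bigr)=G\bigl(\gamma\ln x+\gamma\ln y\bigr)=G\bigl(\gamma\ln(xy)\bigr)=G\bigl(\ln(xy)^{\gamma}\bigr)=F_G(xy),
\]
which is precisely \eqref{glog}.

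There is no real obstacle here: the proposition is essentially a restatement of the fact that $F_G$ intertwines multiplication on $(0,\infty)$ with the group law $\chi$ via the logarithm, and the whole content is the identity $\ln(xy)^\gamma=\gamma\ln x+\gamma\ln y$. The only delicate point worth highlighting in the write-up is that, strictly speaking, the definition of $\ln_G$ in Definition \ref{defglog} also asks for strict concavity; the proposition as stated addresses only the functional equation \eqref{glog}, so I would either restrict attention to that or remark that additional convexity assumptions on $G$ (together with the sign of $\gamma$) are needed to ensure $F_G$ qualifies as a bona fide generalized group logarithm.
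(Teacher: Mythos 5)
Your proof is correct and is essentially the same computation as the paper's: both reduce \eqref{glog} to the identity $\ln (xy)^{\gamma}=\ln x^{\gamma}+\ln y^{\gamma}$ together with $G^{-1}\circ G=\mathrm{id}$, merely read in opposite directions. Your closing remark about strict concavity not being guaranteed by the stated hypotheses is a fair observation but does not affect the validity of the argument for the functional equation itself.
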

\begin{proof}
It suffices to observe that
\bea \label{proofTh1}
F_{G}(xy)&=& G\left(\ln x^{\gamma}+\ln y^{\gamma}\right)=G\left(G^{-1}(F_{G}(x))+G^{-1}(F_{G}(y))\right)\\
\nn &=& \chi(F_{G}(x),F_{G}(y)).
\eea 
\end{proof}

This result can also be formulated in a field of characteristic zero for $G$ in the class of formal power series.  


\noi By way of an example, when $\chi(x,y)=x+y$, we have directly that $G(t)=t$ and $\ln_{G}(x)=\ln x$. If
$\chi(x,y)=x+y+(1-q)xy$, an associated function $G(t)$ is provided by $G(t)=\frac{e^{(1-q)t}-1}{1-q}$ and the group logarithm converts into the Tsallis logarithm
\beq
\ln_q(x):=\frac{x^{1-q}-1}{1-q}. \label{Tslog}
\eeq
\begin{remark}
Let $G$ be a strictly increasing real analytic function of the form \eqref{I.2}. For  $\ln_{G}(x)=G(\ln x)$, the requirement of concavity is guaranteed, for instance, by the simple (but quite restrictive) condition
\beq
 a_{k} >    (k+1)  a_{k+1} \qquad \forall k\in\mathbb{N}, \qquad \text{with } \{a_{k}\}_{k\in\mathbb{N}} > 0 \label{conc},
\eeq
which is also sufficient to ensure that the series $G(t)$ is convergent absolutely and uniformly over the compacts with a radius $r=\infty$. Many other choices are allowed.

\end{remark}

Thus, given a group law, under mild hypotheses we may determine a generalized group logarithm, for instance by means of relation \eqref{Glog} and the condition \eqref{conc} (see \cite{Tempesta4} for a construction of group logarithms from difference operators via the associated group exponential $G$).

\begin{definition} \label{defgexp}
The inverse of a generalized group logarithm will be called the associated generalized group exponential. \end{definition}
This function can be represented in the form
\beq
exp_{G}(x)= e^{G^{-1}(x)}. \label{Gexp}
\eeq

When $G(t)=t$, we have returned to the standard exponential; when $G(t)=\frac{e^{(1-q)t}-1}{1-q}$, we recover the $q$-exponential $e_{q}(x)=\left[1+(1-q)t\right]^{\frac{1}{1-q}}$, and so on.
\begin{remark}
From a computational point of view, observe that the formal compositional inverse $G^{-1}(s)$, such that $G(G^{-1}(s))=s$ and $G^{-1}(G(t))=t$, can be obtained by means of the Lagrange inversion theorem. We get the formal power series
\beq
G^{-1}(s)=s-\frac{a_1}{2}s^2+ \ldots.
\eeq
Also, given $\Psi(x,y)$, by imposing the relation $\Psi(x,y)= G\left(G^{-1}(x)+G^{-1}(y)\right)$, with $\Psi(x,y)=x+y+ \text{higher order terms}$, we get a system of equations that usually allows us to reconstruct the sequence $\{a_{k}\}_{k\in\mathbb{N}}$. Each element $a_{k}$  a priori depends on the set of parameters appearing in $\Psi(x,y)$.
\end{remark}

\section{On the trace-form class of entropies}

\subsection{Basic Properties}


Let us denote by  $N$ the number of particles of a complex system.
\begin{definition} \label{extensivity}
An entropic function $S\left(p_1,\ldots,p_W\right)$ is said to be extensive over the uniform distribution if there exists a function (``growth rate" or ``occupation law" of phase space) $W=W(N)$ such that
\[
\lim_{N\to\infty} S(N)/N = \lambda \ .
\]
\end{definition}
Here $\lambda$ may depend on thermodynamic variables but not on $N$.

\begin{definition} \label{traceform}
An entropy $S$ belongs to the trace-form class if it can be written in the form $S[p]=\sum_{i=1}^{W}F(p_i)$,  where $F$ is a real continuous function, strictly concave, with $F(0)=F(1)=0$  and $F\left([0,1]\right) \subset \mathbb{R}^{+}\cup \{0\}$.
\end{definition}


\subsection{A general logarithm for a composable entropy}

The equation for the multiplicative formal group allows us to give a two-parametric preentation of the logarithm \eqref{Tslog}. Indeed, we observe that the functional equation
\[
f(x+y)=f(x)+f(y)+(1-q)f(x)f(y)
\]
$q\neq 1$, admits the solution
\[
f(x)=\frac{e^{a (1-q) x}-1}{1-q}.
\]
This is coherent with the general theory of functional equations \cite{Aczel}. From this observation, we can also formulate a two-parametric presentation of the Tsallis entropy $S_q$, physically equivalent to it, whose main interest in this context is that it is still \textit{strictly composable}.
To this aim, we consider the function defined by
\beq
S_{a,q}:=\sum_{i=1}^{W} p_i \ln_{q} \left(\frac{1}{p_{i}^{a}}\right)=\frac{1-\sum_{i=1}^{W}p_{i}^{a(q-1)+1}}{q-1}, \label{Sqa}
\eeq
for $a> 0$, $q\in\mathbb{R}/\{1\}$, with the condition $a(q-1)+1>0$.
Needless to say, the previous constraints in the space of parameters ensure that the entropy $S_{a,q}$ satisfies the first three SK axioms \textit{for all real values} of $q$ and specific values of the parameter $a$ (for $a=1$, $q\to 1$ we recover the Boltzmann entropy). Note that the entropies $S_{a,q}$ and $S_{q}$ are related by means of the simple formulae
\[
S_{a,q}=a S_{q'}, \qquad q'=a(q-1)+1 \ .
\]
A direct analysis allows us to establish the following result.
\begin{proposition}
The entropy \eqref{Sqa} is concave in the regions:

i) \[ q\in(-\infty,1), \qquad 0<a< \frac{1}{1-q} \]

ii) \[ q\in (1, \infty), \qquad a>0 . \]
\end{proposition}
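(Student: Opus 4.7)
The plan is to exploit the trace-form structure to reduce the concavity of $S_{a,q}$ on the simplex to the concavity of a single-variable function. Setting $r := a(q-1)+1$ (which is positive by the hypotheses of definition \eqref{Sqa}) and using $\sum_{i=1}^W p_i = 1$, I can rewrite
\[
S_{a,q}(p_1,\ldots,p_W) = \sum_{i=1}^W F(p_i), \qquad F(p) := \frac{p - p^{r}}{q-1}.
\]
Because the probability simplex is cut out by affine constraints and concavity is preserved under finite sums and restriction to affine subspaces, it suffices to show that $F$ is concave on $(0,1)$.

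Next I would compute
\[
F''(p) = -\frac{r(r-1)}{q-1}\, p^{r-2},
\]
and observe that $p^{r-2} > 0$ on $(0,1)$, so the sign of $F''$ is controlled entirely by the sign of $r(r-1)/(q-1)$. The problem thus reduces to determining the $(a,q)$-regions in which $r(r-1)/(q-1) \geq 0$, with strict inequality corresponding to strict concavity of $F$.

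A case split on the sign of $q-1$ then finishes the argument. In case (i), $q<1$ makes $q-1<0$, so the condition $F''\leq 0$ becomes $r(r-1)\leq 0$, i.e.\ $0 \leq r \leq 1$. Substituting $r = a(q-1)+1$, the inequality $r\leq 1$ is automatic for $a>0$, whereas $r\geq 0$ is equivalent to $a \leq 1/(1-q)$; insisting on strict concavity yields the stated open range $0 < a < 1/(1-q)$. In case (ii), $q>1$ makes $q-1>0$, so one needs $r(r-1)\geq 0$; since $r = a(q-1)+1 > 1$ for any $a>0$ and $q>1$, this holds automatically, and the condition $a(q-1)+1>0$ carried from the definition of $S_{a,q}$ is likewise automatic.

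I do not foresee any substantive obstacle. The only point requiring care is the boundary behaviour at $r\in\{0,1\}$, where $F$ degenerates to a linear (hence concave but not strictly concave) function; this is precisely why the admissible parameter ranges in the statement are open. Beyond this bookkeeping, the proof is a routine calculus check once the trace-form reduction to $F$ has been made.
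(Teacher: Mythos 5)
Your proof is correct and is precisely the ``direct analysis'' the paper alludes to without writing out: reduce to the single-variable function $F(p)=(p-p^{r})/(q-1)$ with $r=a(q-1)+1$ via $\sum_i p_i=1$, compute $F''(p)=-r(r-1)p^{r-2}/(q-1)$, and read off the sign in the two cases. The case analysis and the remark about the degenerate (linear) boundary values $r\in\{0,1\}$ are both accurate, so nothing further is needed.
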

It is easy to show that the entropy $S_{a,q}$ is extensive (on equal probabilities) for $W(N)\sim N^{\rho}$. Indeed, the value $q^{*}=1-\frac{1}{a \rho}$ is found such that $\lim_{N\to\infty} S_{a,q}(N)/N= \lambda$,  in a suitable range of $a$, for values of $\rho > 1$.

\noi One also obtains the following interesting property.
\begin{proposition}
Let $A$ and $B$ be two independent systems. The entropy $S_{a,q}$ is strictly composable, with composition law given by
\[
S_{a,q}(A \cup B)= S_{a,q} (A)+S_{a,q} (B)+(1-q) S_{a,q} (A)S_{a,q} (B).
\]
\end{proposition}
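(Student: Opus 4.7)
The plan is to verify the composition law by a direct computation, exploiting the multiplicative structure of the power sums $\sum_i p_i^{\sigma}$ under independent joint distributions. Since $A$ and $B$ are statistically independent, the joint system $A\cup B$ is described by the product probabilities $\{p_i r_j\}$, where $\{p_i\}_{i=1}^{W_A}$ and $\{r_j\}_{j=1}^{W_B}$ are the distributions of $A$ and $B$ respectively. The key observation is that for any exponent $\sigma\in\mathbb{R}$,
\[
\sum_{i,j} (p_i r_j)^{\sigma} = \Bigl(\sum_i p_i^{\sigma}\Bigr)\Bigl(\sum_j r_j^{\sigma}\Bigr),
\]
so the sum factorizes.

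Concretely, I would set $\sigma:=a(q-1)+1$ and abbreviate $P:=\sum_i p_i^{\sigma}$, $R:=\sum_j r_j^{\sigma}$. By definition \eqref{Sqa},
\[
S_{a,q}(A)=\frac{1-P}{q-1},\qquad S_{a,q}(B)=\frac{1-R}{q-1},\qquad S_{a,q}(A\cup B)=\frac{1-PR}{q-1},
\]
where the last identity uses the factorization above. It then only remains to check the algebraic identity
\[
\frac{1-PR}{q-1}=\frac{1-P}{q-1}+\frac{1-R}{q-1}+(1-q)\,\frac{(1-P)(1-R)}{(q-1)^2},
\]
which follows by expanding $(1-P)(1-R)=1-P-R+PR$ and simplifying; equivalently, the variables $u:=(1-P)/(q-1)$ and $v:=(1-R)/(q-1)$ compose under the multiplicative formal group law $\Phi(u,v)=u+v+(1-q)uv$, since this law is precisely designed so that $1-(q-1)\Phi(u,v)=(1-(q-1)u)(1-(q-1)v)=PR$.

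There is no serious obstacle here; the argument is essentially a one-line calculation once the product structure of the joint distribution is used. The conceptual content is that the map $X\mapsto 1-(q-1)X$ linearizes the multiplicative group law $\Phi(u,v)=u+v+(1-q)uv$ into the ordinary product, which is exactly the realization $\Phi(x,y)=G(G^{-1}(x)+G^{-1}(y))$ with $G(t)=(e^{(1-q)t}-1)/(1-q)$ discussed after Definition \ref{defglog}. Thus the composition law for $S_{a,q}$ is a concrete instance of the multiplicative formal group, and in particular $S_{a,q}$ satisfies the composability axiom (C1)--(C4), because the group law $\Phi(x,y)=x+y+(1-q)xy$ is manifestly symmetric, associative and has $0$ as neutral element.
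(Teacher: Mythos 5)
Your proof is correct and complete; the paper itself states this proposition without proof (it is left as a ``direct analysis''), and your argument is exactly the computation that is implicit there: the factorization $\sum_{i,j}(p_i r_j)^{\sigma}=PR$ with $\sigma=a(q-1)+1$, followed by the algebraic identity $\frac{1-PR}{q-1}=\frac{1-P}{q-1}+\frac{1-R}{q-1}-\frac{(1-P)(1-R)}{q-1}$. Your closing remark that $X\mapsto 1-(q-1)X$ intertwines $\Phi(u,v)=u+v+(1-q)uv$ with ordinary multiplication also correctly disposes of axioms (C2)--(C4), so nothing is missing.
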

The entropy $S_{a,q}$ represents the general solution of the previous functional equation in the trace-form class.

\noi \textbf{Comment}. It is interesting to observe that the Landsberg-Vedral entropy \cite{LV} $S_{LV}= \frac{S_{q}}{\sum_i p_{i}^{q}}$, also called the normalized Tsallis entropy, satisfies the law $S^{LV}(A\cup B)= S^{LV}(A)+ S^{LV}(B) + (q-1) S^{LV}(A)S^{LV}(B)$. This entropy, therefore, belongs to the family of \textit{composable} entropies. However, it is not of trace-form type.

A fundamental property of the trace-form class is the following.
\begin{remark} \label{BT}
\textit{The Tsallis entropy (with the Boltzmann entropy as its reduction) is the only known entropy of the trace-form class which is strictly composable.}
\end{remark}

\section{New strictly composable entropies}

\subsection{The Z-family}
If we restrict ourselves to the trace-form class, composability is realized essentially in the case of the  Tsallis entropy, with the Boltzmann entropy as its fundamental reduction.


However, if we relax the trace-form requirement, new possibilities arise. Indeed, Renyi's entropy $R_{\alpha}$ does not have a trace-form structure, and is strictly composable (additive). Thus, we shall introduce a new family of group entropies, depending on an entropic parameter $\alpha$, that do not belong to the trace-form class and generalize the  R\'enyi entropy.

\begin{remark}\label{independence}
In the following, we shall often assume for simplicity the hypothesis of generalized group logarithms of the form $\ln_{G}(x) =G(\ln x)$, with $G$ a strictly increasing analytic function of the form \eqref{I.2}, where $a_i$ are parameters which can vanish; we always assume that they are independent of the entropic parameter $\alpha$.
\end{remark}
\begin{definition} \label{ZG}
Let $\{p_i\}_{i=1,\cdots,W}$, $W\geq 1$, with $\sum_{i=1}^{W}p_i=1$, be a discrete probability distribution. Let $\ln_{G}$ be a generalized group logarithm, according to Definition \ref{defglog}. The associated  $Z$-entropy of order $\alpha$ is defined to be the function
\beq
Z_{G,\alpha}(p_1,\ldots,p_W):=\frac{\ln_{G} \left(\sum_{i=1}^{W} p_{i}^{\alpha}\right)}{1-\alpha},
\eeq
with $\alpha>0$, $\alpha\neq 1$.
\end{definition}



\noi The following theorem establishes one of the most relevant properties of the $Z$-class of entropies: its strict composability.

\begin{theorem}
The $Z_{G,\alpha}$-entropy is strictly composable: given any two statistically independent systems $A$, $B$, defined on an arbitrary probability distribution $\{p_i\}_{i=1,\cdots,W}$, it satisfies the composition rule
\beq
Z_{G,\alpha}(A\cup B)= \Phi(Z_{G,\alpha}(A),Z_{G,\alpha}(B)) \label{Zrule}
\eeq
where
\[
\Phi(x,y)= \frac{1}{1-\alpha}\chi\left((1-\alpha)x,(1-\alpha)y\right)
\]
with $\chi\left(x,y\right)$ being the group law satisfied by the generalized group logarithm $\ln_{G}$ associated to $Z_{G,\alpha}$.
\end{theorem}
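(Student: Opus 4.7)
The plan is to exploit the multiplicativity of the argument $\sum_i p_i^\alpha$ under the independent joint distribution, and then invoke the functional equation \eqref{glog} for the group logarithm $\ln_G$ to convert the product into the group law $\chi$.

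First I would set up notation: if $A$ has distribution $\{p_i\}_{i=1}^{W_A}$ and $B$ has distribution $\{q_j\}_{j=1}^{W_B}$, then statistical independence means the joint system $A\cup B$ is described by the product distribution $\{p_i q_j\}$. The key elementary observation is the factorization
\[
\sum_{i,j} (p_i q_j)^{\alpha} \;=\; \Bigl(\sum_i p_i^{\alpha}\Bigr)\Bigl(\sum_j q_j^{\alpha}\Bigr),
\]
so that if I abbreviate $X := \sum_i p_i^{\alpha}$ and $Y := \sum_j q_j^{\alpha}$, the numerator of $Z_{G,\alpha}(A\cup B)$ becomes $\ln_G(XY)$.

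Next I would invoke the defining property of a generalized group logarithm from Definition \ref{defglog}, namely $\ln_G(XY) = \chi(\ln_G(X), \ln_G(Y))$. Combined with the fact that, by the very definition of $Z_{G,\alpha}$, one has $\ln_G(X) = (1-\alpha)\, Z_{G,\alpha}(A)$ and $\ln_G(Y) = (1-\alpha)\, Z_{G,\alpha}(B)$, I would deduce
\[
Z_{G,\alpha}(A\cup B) \;=\; \frac{\chi\bigl((1-\alpha)Z_{G,\alpha}(A),\,(1-\alpha)Z_{G,\alpha}(B)\bigr)}{1-\alpha} \;=\; \Phi\bigl(Z_{G,\alpha}(A),Z_{G,\alpha}(B)\bigr),
\]
which is exactly the claimed identity \eqref{Zrule}.

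To finish, I would observe that the function $\Phi$ obtained this way inherits symmetry, associativity and the null-composability property from the corresponding properties of the group law $\chi$ (which hold by construction of $\chi$ via \eqref{GPsi}), so Definition \ref{composab} is genuinely satisfied and the composability is strict in the sense of (C1)--(C4). There is no real obstacle in the argument: the only subtle point to flag is that the factorization step works for \emph{arbitrary} probability distributions, not just the uniform one, which is precisely what distinguishes strict composability from the weak variant discussed earlier in the paper.
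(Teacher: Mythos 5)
Your proposal is correct and follows essentially the same route as the paper's own proof: factorize $\sum_{i,j}(p_iq_j)^\alpha$ into the product $XY$, apply the functional equation $\ln_G(XY)=\chi(\ln_G X,\ln_G Y)$, identify $\ln_G X=(1-\alpha)Z_{G,\alpha}(A)$, and note that $\Phi$ is just $\chi$ conjugated by the rescaling $x\mapsto(1-\alpha)x$, so it inherits (C2)--(C4). The only difference is that the paper writes out the associativity check for $\Phi$ explicitly, whereas you state the inheritance without computation; both are fine.
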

\begin{proof}
Let $\{p_{i}^{A}\}_{i=1}^{W_{A}}$ and $\{p_{j}^{B}\}_{j=1}^{W_{B}}$ be two sets of probabilities associated with two statistically independent systems $A$ and $B$. The joint probability is given by
$
p_{ij}^{A \cup B}=p_{i}^{A}\cdot p_{j}^{B}.
$
We have
\bea
\nn Z_{G, \alpha}(A \cup B)&=& \frac{\ln_{G} \left(\sum_{i,j} (p_{ij}^{A \cup B})^{\alpha}\right)}{1-\alpha}=\frac{\ln_{G} \left[\left(\sum_{i=1}^{W_A} (p_{i}^{A})^{\alpha}\right)\cdot\left( \sum_{j=1}^{W_B} (p_{j}^{B})^{\alpha} \right)\right]}{1-\alpha} = \\
\nn &=& \frac{\Phi\left(\ln_{G} \left(\sum_{i=1}^{W_A} (p_{i}^{A})^{\alpha}\right),\ln_{G} \left(\sum_{j=1}^{W_B} (p_{j}^{B})^{\alpha}\right)\right)}{1-\alpha}= \\ \nn
&=&\frac{\Phi((1-\alpha) Z_{G,\alpha}(A),(1-\alpha)Z_{G,\alpha}(B))}{1-\alpha}\\ \nn &=&\Psi\left(Z_{G,\alpha}(A),Z_{G,\alpha}(B)\right). \\
\eea
Let us prove that the $Z_{G,\alpha}$-class is strictly composable. It suffices to show that $\Phi(x,y)$ is a group law, as a consequence of the fact that $\chi\left(x,y\right)$  is.
The symmetry property is trivial. The null-composability comes from the relation
\[
\Phi(x,0)=\frac{1}{1-\alpha}\chi\left((1-\alpha)x,0\right)= x.
\]
To prove associativity, observe that
\begin{eqnarray*}
\Phi(\Phi(x,y),z)&=&\frac{1}{1-\alpha} \chi \left[(1-\alpha)\frac{1}{1-\alpha} \chi\left((1-\alpha)x,(1-\alpha)y\right),(1-\alpha)z\right]= \\
&=& \frac{1}{1-\alpha} \chi \left[(1-\alpha)x, \chi\left((1-\alpha)y,(1-\alpha)z\right)\right]= \\
&=& \frac{1}{1-\alpha} \chi\left[(1-\alpha)x, (1-\alpha)\Phi\left(y, z\right)\right]=\Phi(x,\Phi(y,z)).
\end{eqnarray*}
Consequently, the $Z_{G, \alpha}$ entropies are strictly composable.
\end{proof}


\subsection{R\'enyi entropy}
The celebrated R\'enyi entropy
$R_{\alpha}=\frac{\ln \sum_{i=1}^{W} p_{i}^{\alpha}}{1-\alpha}$ is the first example in the $Z$-class. It is obtained by identifying the generalized logarithm with the standard one. 
R\'enyi's entropy  is indeed the first representative of an infinite tower of entropies, that we shall construct in the following sections. It corresponds to a solution to the functional equation $Z_{G,\alpha}(A\cup B)=Z_{G,\alpha}(A)+Z_{G,\alpha}(B)$. 

\section{Main properties of the class of Z-entropies}

\subsection{Limiting properties}

The class of Z-entropies possesses a simple relation to the most celebrated entropies, in particular to Boltzmann's and R\'enyi's entropies, as stated by the following results.
\begin{proposition}
Under the hypotheses of Remark 5, in the limit  $\alpha\to 1$, the $Z_{G,\alpha}$-entropy reduces to the Boltzmann-Gibbs entropy.
\end{proposition}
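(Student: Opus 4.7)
The plan is to evaluate the limit by L'Hôpital's rule, since both the numerator $\ln_{G}\!\bigl(\sum_{i} p_{i}^{\alpha}\bigr)$ and the denominator $1-\alpha$ vanish at $\alpha = 1$. Indeed, at $\alpha = 1$ the probability sum gives $\sum_{i} p_i = 1$, and since $\ln_{G}(x)=G(\ln x)$ with $G(0)=0$ (recall from \eqref{I.2} that $G(t)=\sum_{k\geq 0} a_{k} t^{k+1}/(k+1)$, so the constant term is zero), the numerator indeed tends to $0$. Thus we are in a $0/0$ situation, and L'Hôpital applies.

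The main computation is then to differentiate the numerator with respect to $\alpha$. Using the chain rule,
\[
\frac{d}{d\alpha}\ln_{G}\!\Bigl(\sum_{i=1}^{W} p_{i}^{\alpha}\Bigr)
= G'\!\Bigl(\ln \sum_{i=1}^{W} p_{i}^{\alpha}\Bigr)\cdot \frac{\sum_{i=1}^{W} p_{i}^{\alpha}\ln p_{i}}{\sum_{i=1}^{W} p_{i}^{\alpha}}.
\]
At $\alpha = 1$, the argument of $G'$ is $\ln 1 = 0$, and from the expansion \eqref{I.2} we read off $G'(0) = a_{0} = 1$. The denominator $\sum_{i} p_{i}^{\alpha}$ equals $1$. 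Hence the derivative of the numerator evaluated at $\alpha=1$ is $\sum_{i} p_{i}\ln p_{i}$, while the derivative of $1-\alpha$ is $-1$.

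Putting the pieces together,
\[
\lim_{\alpha \to 1} Z_{G,\alpha}(p_{1},\ldots,p_{W})
= \frac{\sum_{i=1}^{W} p_{i}\ln p_{i}}{-1}
= -\sum_{i=1}^{W} p_{i}\ln p_{i},
\]
which is precisely the Boltzmann-Gibbs entropy.

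I do not expect any deep obstacle here: the argument is essentially a one-line L'Hôpital calculation. The only subtlety worth flagging is that one must justify the use of L'Hôpital, which requires $G$ to be (at least) $C^{1}$ in a neighborhood of $0$ with non-vanishing derivative at $0$; this is guaranteed by the hypothesis of Remark~\ref{independence}, where $G$ is taken analytic of the form \eqref{I.2} with $a_{0}=1$ (so $G$ is a local diffeomorphism near $0$ and the denominator $\sum_{i} p_{i}^{\alpha}$ stays positive near $\alpha=1$, keeping $\ln \sum_{i} p_{i}^{\alpha}$ well-defined). Equivalently, one could simply Taylor-expand: $\sum_{i} p_{i}^{\alpha} = 1 + (\alpha-1)\sum_{i} p_{i}\ln p_{i} + O\bigl((\alpha-1)^{2}\bigr)$, hence $\ln \sum_{i} p_{i}^{\alpha} = (\alpha-1)\sum_{i} p_{i}\ln p_{i} + O\bigl((\alpha-1)^{2}\bigr)$, and since $G(u) = u + O(u^{2})$, the ratio collapses to $-\sum_{i} p_{i}\ln p_{i}$ in the limit.
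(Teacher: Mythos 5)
Your proof is correct and follows essentially the same route as the paper's: an application of L'H\^opital's rule, using the chain rule on $\ln_{G}(x)=G(\ln x)$ and the normalization $G'(0)=a_{0}=1$. The extra remarks on justifying L'H\^opital and the alternative Taylor-expansion argument are sound but do not change the substance of the argument.
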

\begin{proof}
\[
\lim_{\alpha\to 1} \frac{\ln_{G} \left(\sum_{i=1}^{W} p_{i}^{\alpha}\right)}{1-\alpha}=\lim_{\alpha\to 1} \frac{G\left[ \ln\left(\sum_{i=1}^{W} p_{i}^{\alpha}\right)\right]}{1-\alpha}=
\]
\[
=\lim_{\alpha\to 1} \frac{G'\left[ \ln\left(\sum_{i=1}^{W} p_{i}^{\alpha}\right)\right]\cdot \frac{1}{\sum_{i=1}^{W} p_{i}^{\alpha}}\cdot \sum_{i=1}^{W} p_{i}^{\alpha}\ln p_i}{-1}= \sum_{i=1}^{W} p_{i}\ln \frac{1}{p_i}.
\]
where we took into account the fact that $G'(0)=1$. 
\end{proof}
\begin{proposition}
Under the hypotheses of Remark 5, the $Z_{G,\alpha}$-entropy generalizes the R\'enyi entropy.
\end{proposition}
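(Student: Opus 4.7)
My plan is to prove the statement by a direct specialization: exhibit the R\'enyi entropy as the particular member of the $Z_{G,\alpha}$ family obtained by choosing $G$ to be the identity, and then verify that the family is genuinely larger so that the word ``generalizes'' is not vacuous.

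First I would check that the choice $G(t)=t$ is admissible under the hypotheses of Remark \ref{independence}. Indeed, writing $G$ in the form \eqref{I.2} this corresponds to $a_0=1$ and $a_k=0$ for every $k\geq 1$, which is a strictly increasing, real analytic function vanishing at the origin, independent of the entropic parameter $\alpha$. Hence the associated generalized group logarithm is
\[
\ln_{G}(x)=G(\ln x)=\ln x,
\]
which is continuous, strictly increasing and strictly concave on $(0,\infty)$ with $\ln_G(1)=0$, as required by Definition \ref{defglog}. The associated group law is then $\chi(x,y)=G(G^{-1}(x)+G^{-1}(y))=x+y$, i.e.\ the additive one.

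Substituting this choice into Definition \ref{ZG} yields
\[
Z_{G,\alpha}(p_1,\ldots,p_W)=\frac{\ln_{G}\!\left(\sum_{i=1}^{W}p_i^{\alpha}\right)}{1-\alpha}=\frac{\ln\!\left(\sum_{i=1}^{W}p_i^{\alpha}\right)}{1-\alpha}=R_{\alpha},
\]
so R\'enyi's entropy is exactly the member of the $Z$-class indexed by the identity. To conclude that this is a strict generalization, it suffices to exhibit one further admissible $G$ producing a different entropy: any non-zero choice of the coefficients $\{a_k\}_{k\geq 1}$ satisfying the convergence and concavity conditions in \eqref{conc} yields a $Z_{G,\alpha}$ whose associated group law is non-additive, hence not R\'enyi (explicit instances are provided in Sections 8 and 9 of the paper).

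The argument is essentially a verification, so there is no serious obstacle; the only subtle point is to keep the logical order straight. One must be careful that ``generalizes'' is understood in the sense of containment within the $Z$-family rather than as a limit in a parameter of a fixed $G$ (as was the case for the Boltzmann reduction in the preceding proposition, which required $\alpha\to 1$ together with the normalization $G'(0)=1$). Here no limit is needed: R\'enyi sits inside the class as the distinguished point $G=\operatorname{Id}$ corresponding to the additive formal group law, and the non-triviality of the class ensures the generalization is proper.
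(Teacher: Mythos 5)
Your proposal is correct and follows essentially the same route as the paper: the paper's proof simply observes that $\ln_G(x)$ reduces to $\ln x$ when $a_i\to 0$ for $i=1,2,\ldots$ in the expression \eqref{I.2} of $G(t)$, which is exactly your specialization $G=\operatorname{Id}$ (keeping $a_0=1$). Your additional checks --- admissibility of the identity under Definition \ref{defglog} and the non-vacuity of the generalization --- are harmless elaborations the paper leaves implicit.
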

\begin{proof}
It suffices to observe that the function $\ln_{G}(x)$ reduces to $\ln x$ when $a_{i}\to 0$, $i=1,2,\ldots$,  in the expression \eqref{I.2} of $G(t)$. 
\end{proof}

A crucial property of the $Z$-class is that it is compatible with the SK requirements, as made clear by the following statement.

\subsection{The SK axioms}
\begin{theorem}
The $Z_{G,\alpha}$-entropy satisfies the first three SK axioms, for $0<\alpha<1$. \label{SKTh}
\end{theorem}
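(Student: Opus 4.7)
The plan is to verify the three Shannon-Khinchin axioms (continuity, expansibility, maximum on the uniform distribution) separately, each reducing to a short argument grounded in the stated properties of $\ln_G$ and in the concavity of the power map $x \mapsto x^{\alpha}$ for $0<\alpha<1$.

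First, for continuity, I would observe that the map $(p_1,\ldots,p_W)\mapsto \sum_{i=1}^W p_i^{\alpha}$ is continuous on the simplex for $\alpha>0$ and strictly positive on any probability vector. Since $\ln_G:(0,\infty)\to\mathbb{R}$ is continuous by Definition \ref{defglog} and $\alpha\ne 1$, the composition divided by $1-\alpha$ is continuous. For expansibility, the fact that $\alpha>0$ forces $0^{\alpha}=0$, so adjoining a zero-probability event leaves $\sum_i p_i^{\alpha}$, and hence $Z_{G,\alpha}$, unchanged.

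The substantive step is the maximum-on-uniform property. The strategy is to apply Jensen's inequality to the strictly concave function $x\mapsto x^{\alpha}$ (strictly concave on $[0,\infty)$ precisely because $0<\alpha<1$). Averaging over $i=1,\ldots,W$ yields
\[
\frac{1}{W}\sum_{i=1}^W p_i^{\alpha}\;\leq\;\Bigl(\frac{1}{W}\sum_{i=1}^W p_i\Bigr)^{\alpha}=W^{-\alpha},
\]
so $\sum_{i=1}^W p_i^{\alpha}\leq W^{1-\alpha}$, with equality if and only if $p_i=1/W$ for all $i$. Since $\ln_G$ is strictly increasing and $1/(1-\alpha)>0$ in the range $0<\alpha<1$, applying $\ln_G$ and dividing by $1-\alpha$ preserves both the inequality and its equality case, giving
\[
Z_{G,\alpha}(p_1,\ldots,p_W)\;\leq\;\frac{\ln_G(W^{1-\alpha})}{1-\alpha},
\]
with equality on, and only on, the uniform distribution.

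I do not anticipate a serious obstacle; the only point requiring care is that the direction of the Jensen inequality must match the sign of $1-\alpha$, and it is exactly the restriction $0<\alpha<1$ that lines these two up. For $\alpha>1$ both the concavity of $x^{\alpha}$ and the sign of $1/(1-\alpha)$ reverse, which is precisely why that regime is handled separately via Schur concavity in the paper.
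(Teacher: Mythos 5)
Your proof is correct, but for the maximum principle you take a genuinely different (and more elementary) route than the paper. The paper's proof of (SK2) establishes the stronger property that $Z_{G,\alpha}$ is strictly concave on the probability simplex for $0<\alpha<1$: it combines the concavity of $p\mapsto\sum_i p_i^{\alpha}$ with the fact that $\ln_G$ is strictly concave and increasing, and the maximum at the uniform distribution then follows from concavity together with permutation invariance. You instead prove the maximum principle directly, applying Jensen's inequality to the strictly concave map $x\mapsto x^{\alpha}$ to get $\sum_i p_i^{\alpha}\leq W^{1-\alpha}$ with equality exactly on the uniform distribution, and then pushing this through the strictly increasing $\ln_G$ and the positive factor $1/(1-\alpha)$. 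Your argument is shorter, uses only the monotonicity of $\ln_G$ (not its concavity), and matches the axiom (SK2) exactly as stated in the appendix; the paper's argument costs a little more but yields concavity of the entropy itself, which is a useful property in its own right (e.g.\ for uniqueness of constrained maximizers). Your treatments of (SK1) and (SK3) coincide with the paper's, and your closing remark correctly identifies why the restriction $0<\alpha<1$ is what makes the signs line up.
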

\begin{proof}

We shall check the validity of the axioms in the formulation given in the appendix.

(SK1). By construction, the function \eqref{Zent} is continuous with respect to all of its arguments $p_i$.

(SK2). The entropy \eqref{Zent} is strictly concave for $0<\alpha<1$. Indeed, given two probability distributions $\textbf{p}^{(1)}=\left(p^{(1)}_{1},\ldots,p^{(1)}_{W}\right)$ and $\textbf{p}^{(2)}=\left(p^{(2}_{1},\ldots,p^{(2)}_{W}\right)$, we have
\[
Z_{G,\alpha}(\lambda \textbf{p}_1 + (1-\lambda) \textbf{p}_2)=\frac{\ln_{G} \left(\sum_{i=1}^{W} \left(\lambda p^{(1)}_i + (1-\lambda) p^{(2)}_i\right)^{\alpha}\right)}{1-\alpha} >
\]
\[
> \frac{\ln_{G}\left(\lambda \sum_{i=1}^{W} (p^{(1)}_i)^{\alpha}+(1-\lambda) \sum_{i=1}^{W} (p^{(2)}_i)^{\alpha}\right)}{1-\alpha} >
\]
\[
> \lambda \frac{\ln_{G}\left( \sum_{i=1}^{W} (p^{(1)}_i)^{\alpha}\right)}{1-\alpha} + (1-\lambda) \frac{ \ln_{G}\left( \sum_{i=1}^{W} (p^{(2)}_i)^{\alpha}\right)}{1-\alpha},
\]
where we took into account that $\ln_{G}$ is by construction strictly concave and increasing, and that the function $\sum_{i=1}^{W} p_{i}^{\alpha}$ is concave for $0<\alpha<1$.

(SK3). By adding an event of zero probability, for $\alpha>0$, $\alpha\neq 1$ we have
\beq
Z_{G,\alpha}(p_1,\ldots,p_W,0=p_{W+1})=\frac{\ln_{G} \left(\sum_{i=1}^{W+1} p_{i}^{\alpha}\right)}{1-\alpha}=\frac{\ln_{G} \left(\sum_{i=1}^{W} p_{i}^{\alpha}\right)}{1-\alpha}.
\eeq
\end{proof}
We conclude that the $Z$-entropies are \textit{group entropies}.
\subsection{Extensivity} \label{extens}



 We shall discuss the extensivity properties of the $Z$-family of entropies over the uniform probability distribution (i.e. $p_{i}=1/W$ for all $i=1,\ldots,W$). In other words, we wish to determine the conditions ensuring that a $Z$-entropy is proportional to the number $N$ of particles of a given system,  in the large $N$ limit, when all states are equiprobable. We have that, for $N$ large (and focusing on the leading constribution to the asymptotic behaviour)
\bea
\nn Z_{G,\alpha}[W]\sim \lambda N &\Longleftrightarrow&  \ln_{G}(W^{1-\alpha}) \sim (1-\alpha) \lambda N \\ &\Longleftrightarrow& W(N) \sim \left[\exp_{G} \left((1-\alpha) \lambda N\right)\right]^{1/(1-\alpha)}. \label{occlaw}
\eea
 The expression for $W(N)$ so determined usually can be computed explicitly.

However, the function $W(N)$ must be  interpretable as a phase space growth rate. A \textit{sufficient} condition is that $W(N)$  be defined for all $N\in\mathbb{N}$ as a real function, with $\lim_{N\to\infty} W(N)=\infty$. These requirements usually restrict the space of allowed parameters of the considered entropy.

Consequently, provided the previous condition is satisfied, the entropies of the $Z$-family are extensive on a specific regime, given by a growth rate $W(N)$.


The previous properties altogether indicate the potential relevance of the notion of $Z$-entropy in thermodynamical contexts and, more generally, in the theory of complex systems.

\subsection{The Schur concavity of Z-entropies}

In Theorem \ref{SKTh}, we have assumed $0<\alpha<1$ to guarantee concavity of the $Z$-family. However, if we take $\alpha>1$, under mild hypotheses this family satisfies the  interesting property of Schur concavity.

\noi We recall here some basic facts about majorization \cite{HLP}. A vector $\textbf{a}\in\mathbb{R}^{n}$ majorizes or dominates another vector $\textbf{b}\in\mathbb{R}^{n}$ (and we write $\textbf{a} \succeq \textbf{b})$ if the following properties are satisfied:
\bea
&\textit{i})&\quad \sum_{i=1}^{k}a_i \geq \sum_{i=1}^{k} b_i, \qquad k=1,\ldots,n-1, \\
&\textit{ii})& \quad \sum_{i=1}^{n}a_i=\sum_{i=1}^{n} b_i,
\eea
where $a_i$ and $b_j$ are the elements of $\textbf{a}$ and $\textbf{b}$ sorted in decreasing order.

\begin{definition} \label{Schur-C}
Given two probability distributions $\textbf{p}:=\{p_i\}_{i=1}^{W}$ and $\textbf{r}:=\{r_{j}\}_{j=1}^{W}$, such that $\textbf{p} \preceq \textbf{r}$, we shall say that an entropy $S$ is Schur concave if $S[\bf{p}]\geq$ $S[\bf{r}]$.
\end{definition}

The following result holds.
\begin{theorem}
Assuming the hypotheses of Remark 5, the $Z_{G,\alpha}$-entropy is Schur concave for $\alpha>1$.
\end{theorem}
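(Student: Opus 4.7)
The plan is to reduce Schur concavity of $Z_{G,\alpha}$ to the well-known Schur convexity of the power-sum $\mathbf{p}\mapsto \sum_i p_i^\alpha$ for $\alpha>1$, and then track signs carefully through the outer monotone function $\ln_G$ and the prefactor $1/(1-\alpha)$.

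First I would record the following chain of implications. Suppose $\mathbf{p}\preceq \mathbf{r}$. By the classical Hardy--Littlewood--P\'olya theorem (which was invoked in the introductory paragraph of this subsection), a symmetric separable function $\sum_i \varphi(x_i)$ is Schur convex whenever $\varphi$ is convex. Applied to $\varphi(x)=x^{\alpha}$, which is strictly convex on $[0,1]$ for $\alpha>1$, this yields
\[
\sum_{i=1}^{W} p_i^{\alpha}\;\leq\;\sum_{i=1}^{W} r_i^{\alpha}.
\]
Next, by the standing hypothesis of Remark 5 (namely $\ln_G=G\circ \ln$ with $G$ strictly increasing and $G(0)=0$), the generalized group logarithm $\ln_G$ is strictly increasing on $(0,\infty)$. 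Hence
\[
\ln_{G}\!\left(\sum_{i=1}^{W} p_i^{\alpha}\right)\;\leq\;\ln_{G}\!\left(\sum_{i=1}^{W} r_i^{\alpha}\right).
\]
Finally, dividing by $1-\alpha<0$ reverses the inequality, producing $Z_{G,\alpha}[\mathbf{p}]\geq Z_{G,\alpha}[\mathbf{r}]$, which is exactly the Schur concavity requirement of Definition \ref{Schur-C}.

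The only subtle point is the justification of the sign reversal and the direction of the implication; this is why the hypotheses of Remark 5 are important, since they guarantee that $\ln_G$ preserves inequalities and that the only sign flip in the argument comes from the prefactor $1/(1-\alpha)$. No differentiability of $Z_{G,\alpha}$ is required, so one does not need the Schur--Ostrowski differential criterion; the composition argument above suffices. If one did wish to give a differential proof, the main obstacle would be to verify the Schur--Ostrowski condition
\[
(p_i-p_j)\!\left(\frac{\partial Z_{G,\alpha}}{\partial p_i}-\frac{\partial Z_{G,\alpha}}{\partial p_j}\right)\leq 0,
\]
which after differentiation reduces to $(p_i-p_j)(p_i^{\alpha-1}-p_j^{\alpha-1})\geq 0$ (true for $\alpha>1$), multiplied by the positive factor $G'(\ln\sum_k p_k^{\alpha})\cdot \alpha/((1-\alpha)\sum_k p_k^{\alpha})$ whose overall sign is negative; either route yields the same conclusion.
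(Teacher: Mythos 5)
Your proof is correct, but it takes a genuinely different route from the paper's. The paper verifies the Schur--Ostrowski differential criterion: it computes $\partial Z_{G,\alpha}/\partial p_i=\frac{\alpha}{1-\alpha}\,\ln_G'\!\left(\sum_k p_k^{\alpha}\right)p_i^{\alpha-1}$ and checks that $(p_i-p_j)\bigl(\partial_{p_i}Z_{G,\alpha}-\partial_{p_j}Z_{G,\alpha}\bigr)\leq 0$, exactly the computation you sketch in your closing remarks. Your primary argument instead composes three elementary monotonicity facts: Schur convexity of $\mathbf{p}\mapsto\sum_i p_i^{\alpha}$ for $\alpha>1$ (separable sum of a convex function), preservation of the inequality by the increasing map $\ln_G$, and the sign reversal from the prefactor $1/(1-\alpha)$. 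This buys you something real: you need neither differentiability of $G$ nor the specific form $\ln_G=G\circ\ln$ from Remark \ref{independence} --- only the monotonicity of $\ln_G$ already built into Definition \ref{defglog} --- so your argument establishes the statement under strictly weaker hypotheses and avoids the (unstated in the paper) verification that the first derivatives exist and that the Schur--Ostrowski criterion applies. The paper's differential route, on the other hand, produces the explicit inequality $\frac{\alpha}{1-\alpha}(p_i-p_j)\ln_G'\!\left(\sum_k p_k^{\alpha}\right)(p_i^{\alpha-1}-p_j^{\alpha-1})\leq 0$, which localizes exactly where each hypothesis enters. One small presentational point: you attribute the Schur convexity of separable convex sums to Hardy--Littlewood--P\'olya ``invoked in the introductory paragraph,'' but the paper only recalls the definition of majorization there; you should state that standard lemma explicitly (it is in \cite{HLP} and \cite{PPT}) rather than treat it as already on the table.
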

\begin{proof}
Since, according to Definition \ref{ZG} and our hypotheses, any entropy of the class \eqref{Zent} is permutation invariant in the variables $\left(p_{1},\ldots,p_{W}\right)$ and its first derivatives exist, we shall use the Schur-Ostrowski criterion for Schur concavity \cite{PPT}.
Precisely, we have to prove that, for any probability distribution $\textbf{p}$,
\beq
\left(p_{i}-p_{j}\right)\left(\frac{\partial Z_{G,\alpha}}{\partial p_i}-\frac{\partial Z_{G,\alpha}}{\partial p_j}\right)\leq 0, \quad 1\leq i\neq j\leq W.
\eeq
The previous relation is indeed equivalent to the following inequality:
\beq
\frac{\alpha}{1-\alpha}\left(p_{i}-p_{j}\right) \ln_{G}' \left(\sum_{i=1}^{W} p_{i}^{\alpha}\right)\left({p_{i}^{\alpha-1}-p_{j}^{\alpha-1}}\right) \leq 0,  \quad 1\leq i\neq j\leq W \ ,
\eeq
which holds due to the properties of $\ln_{G}$. 
\end{proof}

\section{A tower of new entropic forms}

The previous construction allows us to establish a correspondence among trace-form and non-trace-form entropies. Precisely, given a well-defined trace-form entropy, \textit{weakly composable}, we can associate with it a new, non-trace-form but \textit{strictly composable}, generalized entropy. In this way, we can generate a tower of  new examples of group entropies that parallel the well-known entropies, with infinitely many more in addition.

As we have clarified, both the entropies of Boltzmann and R\'enyi  correspond to the additive formal group. Let us consider the first example of nonadditive entropic function of the $Z$-class \eqref{Zent}, naturally associated with the multiplicative group law.
\begin{definition} We shall call the function
\beq
Z_{q,\alpha}=\frac{\ln_{q} \left(\sum_{i=1}^{W} p_{i}^{\alpha}\right)}{1-\alpha} \label{AT}
\eeq
the $Z_{q,\alpha}$-entropy. Here $q>0$, $0<\alpha<1$ and $\ln_{q}(x)$ is given by eq. \eqref{Tslog}.
\end{definition}
\noi The composition law for the entropy \eqref{AT} is
\beq
\nn Z_{q,\alpha}(A\cup B)=Z_{q,\alpha}(A)+Z_{q,\alpha}(B)+ (1-\alpha)(1-q) Z_{q,\alpha}(A) Z_{q,\alpha}(B).
\eeq
The entropy \eqref{AT}, after a re-scaling of the parameters $q,\alpha$, coincides with the Sharma-Mittal entropy \cite{SM}.

\noi A new example can be obtained from the group law
\bea
\nn \chi(x,y,k)&=&x\sqrt{1+k^2 y^2} + y \sqrt{1+ k^2 x^2}=x+y+\frac{k^2}{2}(x y^2+x^2 y)-\frac{k^4}{8}(x y^4+x^4 y)+
\\ \nn &+&\text{higher order terms}.
\eea
The associated logarithm is the Kaniadakis logarithm \cite{Kaniad1}, defined by
\beq
\ln_{k}^{K}(x):=\frac{x^{k}-x^{-k}}{2k},
\eeq
with $-1<k<1$. A new element of the $Z$-family is therefore the following entropy.
\begin{definition}
The function
\beq
Z_{k,\alpha}(p_1,\ldots,p_W):=\frac{\left(\sum_{i=1}^{W} p_{i}^{\alpha}\right)^k -\left(\sum_{j=1}^{W} p_{j}^{\alpha}\right)^{-k}}{2k(1-\alpha)} \label{Zk}
\eeq
will be called the $Z_{k,\alpha}$-entropy, with $0<\alpha<1$, $-1<k<1$.
\end{definition}

The strict composability of the entropy \eqref{Zk} can be directly ascertained from the formula
\[
Z_{k,\alpha}(A\cup B)= Z_{k,\alpha}(A)\sqrt{1+k^2 (1-\alpha)^2 Z_{k,\alpha}(B)^2}+ Z_{k,\alpha}(B)\sqrt{1+ k^2(1-\alpha)^2 Z_{k,\alpha}(A)^2}.
\]

This procedure can be easily extended to infinitely many group laws; correspondingly, it yields a sequence of entropic forms. Another particularly interesting case will be discussed below.

\section{The $Z_{a,b}$-entropy: A generalization of the Boltzmann, Tsallis, R\'enyi, Sharma-Mittal entropies}
\subsection{Definition}
In this section, we present a first example  of a composable, three-parametric entropy. Perhaps the most appealing feature of this entropy is that it generalizes
some of the most important entropies known in the literature, which have played a prominent role in information theory, thermodynamics and generally speaking in applied mathematics.

\begin{definition} The  $Z_{a,b}$-entropy is defined to be the function
\beq
Z_{a,b}(p_1,\ldots,p_W):=\frac{\left(\sum_{i=1}^{W} p_{i}^{\alpha}\right)^a -\left(\sum_{j=1}^{W} p_{j}^{\alpha}\right)^b}{(a-b)(1-\alpha)}.
\eeq
with $0<\alpha<1$, and $a>0$, $b\in\mathbb{R}$ or $a\in\mathbb{R}$, $b>0$, with $a\neq b$.
\end{definition}
\subsection{Limiting properties}
The $Z_{a,b}$-entropy reduces to the Boltzmann entropy in the limit  $\alpha\to 1$. We also have the following notable properties.

\begin{proposition}
The $Z_{a,b}$ entropy reduces to the R\'enyi entropy in the double limit $a\to 0$, $b\to 0$.
\end{proposition}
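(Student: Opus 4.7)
The plan is to reduce the statement to the elementary fact that the divided difference of $f(t) = S^t$ at $t=a$ and $t=b$ converges to $f'(0) = \ln S$ as $a,b \to 0$, where $S := \sum_{i=1}^{W} p_i^{\alpha}$.

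First, I would introduce the shorthand $S := \sum_{i=1}^{W} p_i^{\alpha}$, observe that $S > 0$ for any probability distribution, and rewrite
\[
Z_{a,b}(p_1,\ldots,p_W) = \frac{1}{1-\alpha} \cdot \frac{S^a - S^b}{a-b}.
\]
The double limit $a\to 0$, $b\to 0$ (with $a\neq b$ throughout) produces an indeterminate form $0/0$ in the second factor, so the work is to evaluate this limit carefully and identify it as $\ln S$.

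Next, I would expand $S^a = \exp(a\ln S)$ and $S^b = \exp(b\ln S)$ as absolutely convergent power series in $a$ and $b$:
\[
S^a - S^b = (a-b)\ln S + \frac{a^2-b^2}{2}(\ln S)^2 + \frac{a^3-b^3}{6}(\ln S)^3 + \cdots
\]
Dividing by $a-b$ and using $\frac{a^k - b^k}{a-b} = \sum_{j=0}^{k-1} a^j b^{k-1-j}$, one obtains
\[
\frac{S^a - S^b}{a-b} = \ln S + \frac{a+b}{2}(\ln S)^2 + \sum_{k\geq 3}\frac{1}{k!}\Bigl(\sum_{j=0}^{k-1} a^j b^{k-1-j}\Bigr)(\ln S)^k.
\]
Each term beyond the first vanishes as $(a,b)\to (0,0)$, so $\lim_{a,b\to 0}\frac{S^a - S^b}{a-b} = \ln S$. (Equivalently, the mean value theorem applied to $f(t) = S^t$ gives $\frac{S^a-S^b}{a-b} = S^{\xi}\ln S$ for some $\xi$ between $a$ and $b$, and $S^{\xi}\to 1$.) Therefore
\[
\lim_{a,b\to 0} Z_{a,b}(p_1,\ldots,p_W) = \frac{\ln\!\left(\sum_{i=1}^{W} p_i^{\alpha}\right)}{1-\alpha} = R_{\alpha}(p_1,\ldots,p_W).
\]

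I do not anticipate any real obstacle: the only subtlety is that the limit is jointly taken in two variables with the constraint $a\neq b$, but the Taylor expansion (or the mean value argument) handles both variables uniformly, since the remainder after the leading term is bounded by a quantity that tends to zero independently of the path along which $(a,b)\to(0,0)$.
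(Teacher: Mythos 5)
Your proof is correct. The paper actually states this proposition without any proof at all (unlike the neighbouring Tsallis reduction, for which it gives an iterated limit $b\to 0$ followed by $a\to 1$), so your argument supplies something the paper omits. Your treatment is also the more careful one: by recognizing $\frac{S^a-S^b}{a-b}$ as the divided difference of $f(t)=S^t$ and controlling the remainder uniformly (via the explicit series or the mean value theorem, which gives $S^{\xi}\ln S$ with $\xi$ between $a$ and $b$), you establish the genuine joint limit as $(a,b)\to(0,0)$ with $a\neq b$, rather than merely an iterated limit along a particular order of the variables. The only cosmetic remark is that the positivity $S>0$ you invoke is what makes $\ln S$ and the expansion $S^a=e^{a\ln S}$ legitimate, and you do state it; nothing further is needed.
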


\begin{proposition}
The $Z_{a,b}$ entropy reduces to the Tsallis entropy in the double limit $b\to 0$, $a\to 1$.
\end{proposition}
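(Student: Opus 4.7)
The plan is to plug in the limits directly into the explicit expression of $Z_{a,b}$, exploiting the fact that, in contrast with the R\'enyi limit $a,b\to 0$, the Tsallis limit $a\to 1$, $b\to 0$ is not an indeterminate form. First I would set $\Sigma := \sum_{i=1}^{W} p_i^\alpha$, which is strictly positive because $\sum_i p_i = 1$ forces at least one $p_i>0$, so $\Sigma \geq p_i^\alpha > 0$. The definition of the entropy then reads
\[
Z_{a,b}(p_1,\ldots,p_W) \;=\; \frac{\Sigma^{a} - \Sigma^{b}}{(a-b)(1-\alpha)}.
\]

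Next I would take the joint limit term by term, using that $t\mapsto \Sigma^{t}$ is continuous for $\Sigma>0$. As $a\to 1$ and $b\to 0$ one has $\Sigma^{a}\to \Sigma$, $\Sigma^{b}\to \Sigma^{0}=1$, and $a-b\to 1$. Since the denominator stays bounded away from zero throughout the limiting process, we obtain
\[
\lim_{\substack{a\to 1\\ b\to 0}} Z_{a,b}(p_1,\ldots,p_W) \;=\; \frac{\Sigma-1}{1-\alpha} \;=\; \frac{\sum_{i=1}^{W} p_i^{\alpha}-1}{1-\alpha},
\]
which is precisely the Tsallis entropy with entropic index $\alpha$, in the standard normalisation $S_q(p)=(\sum_i p_i^{q}-1)/(1-q)$ recovered from \eqref{Sqa} by setting $a=1$ and renaming $q\mapsto\alpha$.

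No serious obstacle is expected. The only point worth flagging is that the double limit is an honest joint limit rather than an iterated one: the map $(a,b)\mapsto Z_{a,b}(p)$ extends continuously to a neighbourhood of $(1,0)$ in the parameter plane precisely because $a-b\to 1\neq 0$, so the limit value is independent of the path along which $(a,b)\to(1,0)$. This is in sharp contrast with the R\'enyi limit $(a,b)\to(0,0)$ treated in the preceding proposition, where the $0/0$ indeterminacy genuinely requires a L'Hopital-type computation. Consequently, the Tsallis reduction follows by straightforward continuity, with no further analytic input.
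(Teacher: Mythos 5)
Your proof is correct and follows essentially the same route as the paper: both amount to direct substitution into $\frac{\Sigma^{a}-\Sigma^{b}}{(a-b)(1-\alpha)}$, exploiting that the denominator tends to $1-\alpha\neq 0$. The only (harmless) difference is that the paper evaluates the iterated limit $\lim_{a\to 1}\lim_{b\to 0}$ while you establish the joint limit and note its path-independence, which is a slightly stronger but equally elementary observation.
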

\begin{proof}
\[
\lim_{a\to 1}\left(\lim_{b\to 0} \frac{\left(\sum_{i=1}^{W} p_{i}^{\alpha}\right)^a -\left(\sum_{j=1}^{W} p_{j}^{\alpha}\right)^b}{(a-b)(1-\alpha)}\right)=\lim_{a\to 1}\frac{\left(\sum_{i=1}^{W} p_{i}^{\alpha}\right)^a -1}{a(1-\alpha)}= \frac{1-\sum_{i=1}^{W} p_{i}^{\alpha}}{\alpha-1},
\]
which is the expression of the Tsallis entropy, as a function of the parameter $\alpha$. 
\end{proof}

\begin{proposition}
The $Z_{a,b}$-entropy reduces to the $Z_{k,\alpha}$-entropy in the limit $a=-b=k$.
\end{proposition}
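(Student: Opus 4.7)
The plan is to observe that this is essentially a direct substitution, so the main task is to verify that the algebraic identification is consistent with the parameter constraints in both definitions and then perform the simplification.

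First I would write $S := \sum_{i=1}^{W} p_i^{\alpha}$ to keep the computation compact, so that
\[
Z_{a,b}(p_1,\ldots,p_W) = \frac{S^{a} - S^{b}}{(a-b)(1-\alpha)}, \qquad Z_{k,\alpha}(p_1,\ldots,p_W) = \frac{S^{k} - S^{-k}}{2k(1-\alpha)}.
\]
Then I would substitute $a = k$, $b = -k$ in the first expression. The denominator becomes $(k-(-k))(1-\alpha) = 2k(1-\alpha)$, and the numerator becomes $S^{k} - S^{-k}$, matching the $Z_{k,\alpha}$ formula exactly.

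Next, I would check that the parameter ranges are compatible: the $Z_{a,b}$-entropy requires $0 < \alpha < 1$ together with $a \neq b$ and one of $a,b$ positive, while the $Z_{k,\alpha}$-entropy requires $0 < \alpha < 1$ and $-1 < k < 1$ (with $k \neq 0$ implicit, otherwise the denominator $2k$ vanishes). Taking $a = k$ and $b = -k$ with $0 < k < 1$ satisfies the $Z_{a,b}$ requirements (since $a > 0$ and $a \neq b$), so the reduction is legitimate on the overlap of parameter domains, and the case $-1 < k < 0$ follows by the symmetry $(a,b) \mapsto (b,a)$ in the defining formula.

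There is essentially no obstacle here; the statement is a one-line algebraic identity between the two defining expressions once $a = -b = k$ is imposed. The only point requiring mild care is the sign convention and the compatibility of the admissible parameter ranges, which is straightforward to verify as above.
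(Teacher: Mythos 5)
Your proof is correct; the paper states this proposition without proof, treating it as an immediate consequence of the definitions, and your direct substitution $a=k$, $b=-k$ (giving denominator $2k(1-\alpha)$ and numerator $\left(\sum_i p_i^\alpha\right)^k-\left(\sum_i p_i^\alpha\right)^{-k}$) is exactly the intended computation. The additional check of parameter-range compatibility is a reasonable, if unnecessary, refinement.
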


\begin{proposition}
The $Z_{a,b}$-entropy reduces to the Sharma-Mittal entropy in the limit $b\to 0$.
\end{proposition}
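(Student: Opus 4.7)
The plan is to evaluate the limit directly in the definition of the $Z_{a,b}$-entropy and identify the resulting expression with the Sharma--Mittal entropy by going through its equivalent presentation as the $Z_{q,\alpha}$-entropy \eqref{AT}, which the paper has already noted coincides with Sharma--Mittal up to a re-scaling of the parameters.

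First I would abbreviate $S:=\sum_{i=1}^{W} p_i^{\alpha}$. Since $S>0$, the map $b\mapsto S^{b}=e^{b\ln S}$ is entire in $b$ with $S^{0}=1$, so taking $b\to 0$ in both numerator and denominator of the definition is a routine substitution provided $a\neq 0$ (which is consistent with the admissibility range, since the constraint $a\neq b$ in a punctured neighbourhood of $b=0$ forces $a\neq 0$ at the limit). Direct evaluation then yields
\[
\lim_{b\to 0} Z_{a,b}(p_1,\ldots,p_W)=\frac{S^{a}-1}{a(1-\alpha)}.
\]
Next I would unfold the $Z_{q,\alpha}$-entropy \eqref{AT} using the explicit expression \eqref{Tslog} of the Tsallis logarithm,
\[
Z_{q,\alpha}=\frac{\ln_{q}(S)}{1-\alpha}=\frac{S^{1-q}-1}{(1-q)(1-\alpha)},
\]
and observe that setting $a=1-q$ makes this formula coincide verbatim with the limit computed above. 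Since $Z_{q,\alpha}$ is the Sharma--Mittal entropy (modulo the parameter rescaling indicated just after \eqref{AT}), this identifies the limit as a Sharma--Mittal entropy.

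There is essentially no genuine obstacle; the only subtle point is keeping track of the admissible parameter region, namely that $a>0$ in the hypotheses of the definition maps under $a=1-q$ to $q<1$, which falls within the usual Sharma--Mittal range, so the parametric identification is consistent. If one wished to include also the case $a<0$ (allowed by the alternative hypothesis $b>0$, $a\in\mathbb{R}$), the same computation applies and simply extends the $q$-range symmetrically.
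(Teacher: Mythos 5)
Your computation is correct and is exactly the argument the paper leaves implicit: the paper omits a proof here, but the evaluation $\lim_{b\to 0}Z_{a,b}=\bigl(S^{a}-1\bigr)/\bigl(a(1-\alpha)\bigr)$ is the same first step it carries out explicitly in the Tsallis-limit proposition, and the identification with Sharma--Mittal via $Z_{q,\alpha}$ with $a=1-q$ is precisely the re-scaling remark stated after \eqref{AT}. Your care about the admissible parameter region ($a\neq 0$ at the limit) is a sensible, harmless addition.
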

\subsection{Group-theoretical structure: the Abel formal group laws}
The $Z_{a,b}$-entropy is related to the Borges-Roditi logarithm:
\[
\ln_{BR}(x):=\frac{x^{a}-x^{b}}{a-b}.
\]
If we put $x= e^{u}$, we recover the function known in the mathematical literature as the Abel exponential, given by
\beq
\text{exp}_{Abel}=\frac{e^{au}-e^{bu}}{a-b}=\frac{e^{\gamma u}}{\sqrt{\delta}}\sinh \left(\sqrt{\delta} u\right), \label{Abelexp}
\eeq
where
$
\gamma= (a+b)/2$, $\delta=(a-b)^2/4.$
The Abel exponential \eqref{Abelexp} satisfies the Abel functional equation, as was proven in \cite{Abel}.
The composition laws associated  are remarkable: the \textit{Abel formal group laws}, defined by
\beq
\chi_{\mathcal{A}}(x,y)=x+y+\beta_1 xy+ \sum_{j>i} \beta_i\left(xy^{i}-x^{i}y\right). \label{AFG}
\eeq
The coefficients $\beta_n$ in (\ref{AFG}) can be expressed as polynomials in $a$ and $b$ (see corollary 2.2 of \cite{CJ}):
\beq
\beta_1=a+b, \quad \beta_n=\frac{(-1)^{n-1}}{n!(n-1)}\prod_{\overset{i+j=n-1}{i,j\geq 0}} (ia+jb),\quad n>1.
\eeq
The ring over which the universal Abel formal group law is defined has been studied in \cite{BK,CJ}; the cohomology theory associated with this formal group law has been studied in \cite{BK}  and in \cite{Busato}.

\section{Quantum $Z$-entropies}

The class of $Z$-entropies can be quantized by means of a standard procedure. Precisely, let us consider a quantum system, whose states belong to a finite-dimensional Hilbert space, and are represented by semi-positive definite density matrices $\rho$
(see, e.g., \cite{NC}, \cite{Horodecki}). 
\begin{definition} The family of quantum $Z$-entropies is defined by the formal relation
\beq
Z_{G,\alpha}(p_1,\ldots,p_W):=\frac{\ln_{G} \left(\text{tr} \rho^{\alpha}\right)}{1-\alpha}, \label{qZent}
\eeq
where $\alpha>0$, $\alpha\neq 1$ and $\ln_{G}$ is a generalized group logarithm.
\end{definition}
For each specific entropy, the set of parameters appearing in eq. \eqref{qZent} are supposed to satisfy suitable constraints and, if necessary, certain conventions are tacitly adopted (as $0 \ln  0 =0$ in the case of von Neumann entropy); however, we will not treat in detail these issues here.

We shall discuss now the example of the quantum version of the $Z_{a,b}$-entropy. Similar considerations apply to other cases.

\begin{definition} The quantum $Z_{a,b}$-entropy is defined to be
\beq
Z_{a,b}[\rho]=\frac{\left(\text{tr} \rho^{\alpha}\right)^{a} -\left(\text{tr} \rho^{\alpha}\right)^{b}}{(a-b)(1-\alpha)}. \label{qZab}
\eeq
\end{definition}
One of the most interesting aspects of the family of quantum entropies \eqref{qZent} is that they are directly related to the von Neumann entropy
$
S_{VN}= -\text{tr} \rho \ln \rho
$
and to the quantum R\'enyi entropy
$
S_{\alpha}(\rho)=\frac{1}{1-\alpha} \ln \text{tr} \rho^{\alpha}.
$
\noi Concerning the quantum $Z_{a,b}$-entropy, it possesses other interesting limits. For instance, it can be related to the quantum version of the Sharma-Mittal entropy
\beq
Z_{a,0}[\rho]=\frac{\left(\text{tr} \rho^{\alpha}\right)^a -1}{a(1-\alpha)}, \label{qZab}
\eeq
and leads to a two-parametric generalization of Kaniadakis entropy
\beq
Z_{k,-k}[\rho]=\frac{\left(\text{tr} \rho^{\alpha}\right)^{k} -\left(\text{tr} \rho^{\alpha}\right)^{-k}}{2k(1-\alpha)}. \label{qZab}
\eeq
Consider now the example of a quantum system whose Hilbert space is the tensor product $\mathcal{H}_{A} \times \mathcal{H}_{B}$ of two of its subsystems $A$ and $B$. For instance, we can consider a quantum spin chain of $N$ spins, and a block of $L$ sites ($1\leq L \leq N-1$). Therefore, we can identify $\mathcal{H}_A \times \mathcal{H}_B$ with the Hilbert spaces of the first $L$ and $N-L$ spins. Given the ground state $| \psi \rangle$ of the chain, we can introduce the reduced density matrix $\rho_{L}= \text{tr}_{N-L} | \psi \rangle \langle \psi |$, where $\text{tr}_{N-L}$ denotes the trace over the Hilbert space of the remaining $N-L$ sites. Thus, the previous formulae can be immediately written in terms of the reduced density matrix $\rho_{L}$ to define bipartite entanglement entropies for the considered spin chain.

A priori, due to the multi-parametric nature of these entropies, they can be particularly useful in the study of entangled systems. In this perspective, we believe that $Z_{a,b}[\rho]$ deserves special attention.
An analysis of this aspect is outside the scope of the present paper, and will be discussed elsewhere.

\subsection{Applications: the generalized isotropic Lipkin-Meshkov-Glick model} As an example of application of the entropic forms presented in the previous sections, we shall consider the generalized isotropic Lipkin-Meshkov-Glick model of $N$ interacting particles introduced in \cite{CFGRT}. Its Hamiltonian reads
\[
H=\sum_{i<j} h_{ij}\left(1-S_{ij}\right)+\sum_{k=1}^{m}c_k \left(J^{k}-N h_{k}\right)^2,
\]
where $h_{ij}, c_k >0$, $h_{k}\in\mathbb{R}$ and $J^{k}=\sum_{i} J_{i}^{k}$, $1\leq k \leq m$. Here $J_{i}^{k}$ $(k=1,\ldots, m)$ are operators whose action on the $i$-$th$ site is $J_{i}^{k}=T_{i}^{kk}-T_{i}^{m+1,m+1}$, where $T^{pq}$ is the $(m+1)\times (m+1)$ matrix whose only nonzero element is a 1 in the $p$-th row and $q$-th column. The non-degenerate ground state of the model is a Dicke state of $su(m+1)$ type.

The ground state entanglement entropies for this model can be easily computed in the case of our group entropies, starting from the results of \cite{CFGRT}. There, the reduced density matrix for a block of $L$ spins was computed, when the system is in its ground state. In the large $L$ limit, we obtain the formulae
\begin{eqnarray}
Z_{a,0}[\rho]&=&\frac{\alpha^{\frac{-ma}{2}}\left[2 \pi L (1-\gamma) \prod_{j=1}^{m+1} n_{j} ^{1/m}\right]^{\frac{am(1-\alpha)}{2}}-1}{a(1-\alpha)} \\ &\approx &
\frac{L^{\frac{am(1-\alpha)}{2}}}{a (1-\alpha) \alpha^{\frac{ma}{2}}}\left[2 \pi (1-\gamma) \prod_{j=1}^{m+1} n_{j} ^{1/m}\right]^{\frac{am(1-\alpha)}{2}} + O(1), \label{GSEE}
\end{eqnarray}
where $\alpha$ is the entropic parameter, $\gamma=\lim_{N\to\infty}\frac{L}{N}$, $n_j$ are magnon densities (see \cite{CFGRT} for details).
The relevance of formula \eqref{GSEE} for $Z_{a,0}[\rho]$ is due to the fact that it solves an open problem proposed in \cite{CFGRT}. Indeed, the quantum R\'enyi entropy is not extensive, i.e. proportional to $L$. The quantum Tsallis entropy is extensive, but for $m\geq 3$ only. Observe that, instead, the entropy $Z_{a,0}$ can be made \textit{linear} in $L$. Indeed,
we see immediately that the extensivity requirement implies $\alpha = 1- \frac{2}{am}$, which is satisfied for any $m$, for suitable values of $a$.


\section{Open problems and future perspectives}
The present work represents a first exploration of the mathematical properties of a new, large family of non-trace-form \textit{group entropies} coming from formal group theory via the composability axiom. The underlying group-theoretical structure is responsible for essentially all the relevant properties of this family. 




The results obtained above suggest that the $Z$-class  can be a flexible tool,  offering in a future perspective new insight into different contexts of the theory of classical and quantum complex systems, in ecology and social sciences. For instance, we plan to define generalized diversity indices related to the class. Also, $Z$-entropies are particularly suitable for defining new divergences, generalizing those of R\'enyi \cite{EH} and Kullback-Leibler \cite{KL}  and are potentially relevant in the context of information geometry \cite{AN}.

\par
A directly related expression for $Z$-entropies is given by the formula
\beq
S[p]:=G \left(\frac{\ln \sum_{i=1}^{W} p_{i}^{\alpha}}{1-\alpha}\right) \label{newZ} .
\eeq
Although very similar to the main definition of $Z$-entropies proposed in Definition \ref{ZG}, the class \eqref{newZ} deserves attention for its statistical properties, and will be analyzed elsewhere.
\par

We wish to point out that some of the definitions of the theory can be reformulated in different ways, and some conditions can be relaxed. The present approach has the advantage of providing a large class of group entropies in a simple way; however, the  problem of determining the most general form for group entropies is open.
We shall discuss in detail recent progress on these issues elsewhere.

In our opinion, the crucial role played by the group-theoretical approach in the description of compound systems paves the way towards a re-foundation of the theory of generalized entropies in terms of group entropies.

An open problem is to establish whether the Z-entropies  are \textit{Lesche stable}. We conjecture that, in this respect, the entropies of the class \eqref{Zent} have essentially the same behaviour as the R\'enyi entropy. As was proved in \cite{JA}, from many respects, R\'enyi's entropy can also be considered an observable. 
This conjecture will be thoroughly analyzed in a future work.

We also wish to mention that a generalization of the correspondence among entropies and zeta functions \cite{Tempesta4,Tempesta5,Tempestaprepr} to the case of multi-parametric entropies and multiple zeta values and polylogarithms is an interesting open problem. 



\appendix
\section{The Shannon-Khinchin axioms}
As customary in the literature on generalized entropies, here we re-state  the original requirements of Shannon and Khinchin for an entropy to be admissible, in terms of four requirements.

(SK1) (Continuity). The function $S(p_1,\ldots,p_W)$ is continuous with respect to all its arguments.

(SK2) (Maximum principle).  The function $S(p_1,\ldots,p_W)$ takes its maximum value over the uniform distribution $p_i=1/W$, $i=1,\ldots,W$.

(SK3) (Expansibility). Adding an event of zero probability to a probability distribution does not change entropy: $S(p_1,\ldots,p_W,0)=S(p_1,\ldots,p_W)$.

(SK4) (Additivity). Given two subsystems $A$, $B$ of a statistical system,
\[
S(A \cup B)=S(A)+S(B\mid A).
\]


\vspace{3mm}



\vspace{3mm}

\noi \textbf{Fundings}.

\noi This work has been partly supported by the research project FIS2015-63966, MINECO, Spain, Spain, and by the ICMAT Severo Ochoa project SEV-2015-0554 (MINECO).

\vspace{3mm}



\vspace{3mm}

\textbf{Acknowledgments}.

I wish to thank  J. A. Carrasco, A. Gonz\'alez L\'opez, M. A. Rodr\'iguez and G. Sicuro for useful discussions.

\end{document}